\documentclass[11pt]{article}

\usepackage{enumitem} 
\setitemize{itemsep=-4pt} 
\setenumerate{itemsep=-4pt} 

\usepackage[margin=1cm]{caption} 
\usepackage{subfig}
\usepackage{tikz}
\usetikzlibrary{arrows,backgrounds,calc,fit,shapes.geometric}

\usepackage{mathtools} 

\tikzset{every fit/.append style=text badly centered}


\usepackage{fullpage}
\usepackage{amsmath}
\usepackage{amsfonts}
\usepackage{amssymb}
\usepackage{amsthm}
\usepackage{mathrsfs}


\newtheorem{theorem}{Theorem}[section]

\newtheorem{corollary}[theorem]{Corollary}

\newtheorem{definition}[theorem]{Definition}

\newtheorem{lemma}[theorem]{Lemma}


\usepackage[bookmarks=true,hypertexnames=false,pagebackref]{hyperref}
\hypersetup{colorlinks=true, citecolor=blue, linkcolor=red, urlcolor=blue}

\newcommand{\arity}{\operatorname{arity}}

\def\borderColor{blue!60}

\tikzstyle{internal} = [draw, fill, shape=circle]
\tikzstyle{external} = [shape=circle]
\tikzstyle{square}   = [draw, fill, rectangle]
\tikzstyle{triangle} = [draw, fill, regular polygon, regular polygon sides=3, inner sep=2.5pt] 

\newcommand{\newfontobj}[2]{
  \newcommand{#1}[1]{
    \expandafter\def\csname##1\endcsname{{#2 ##1}}}}

\newcommand{\BPP}{\texttt{BPP}}
\newcommand{\BQP}{\texttt{BQP}}

\newcommand{\numP}{\#\texttt{P}}

\newcommand{\transpose}[1]{#1^\texttt{T}}

\newcommand{\tensor}{\otimes}

\title{Clifford Gates in the Holant Framework}

\author{
 Jin-Yi Cai\\
 \scriptsize University of Wisconsin-Madison\\
 \footnotesize \texttt{jyc@cs.wisc.edu}
 \and
 Heng Guo\\
 \scriptsize Queen Mary, University of London\\
 \footnotesize \texttt{h.guo@qmul.ac.uk}
 \and
 Tyson Williams\\
 \scriptsize  Blocher Consulting\\
 \footnotesize \texttt{tdw@cs.wisc.edu}
}


\begin{document}
\maketitle

\begin{abstract}
  We show that the Clifford gates and stabilizer circuits in the quantum computing literature, 
  which admit efficient classical simulation,
  are equivalent to affine signatures under a unitary condition.
  The latter is a known class of tractable functions under the Holant framework.
\end{abstract}

\section{Introduction}

Ever since Shor's famous quantum algorithm to factor integers in polynomial time~\cite{Sho99},
researchers have sought to understand the relationship between efficient classical computation,
as represented by the class \BPP, and efficient quantum computation, as represented by the class $\BQP$.
Since the quantum computational model is typically presented as a uniform family of quantum circuits, the question becomes,
``which classes of quantum circuits can be simulated by a classical computer in polynomial time and which ones cannot?''
It is widely believed that $\BPP \ne \BQP$, which implies that not all quantum circuits have efficient classical simulations.
The goal is then to determine under what restrictions this becomes possible.

There have been many successes along these lines by considering various bounds on
measures of entanglement~\cite{JL03, Vid03, SDV06}, width~\cite{YS06, MS08}, and planarity~\cite{Joz06, Bra09}.
Another popular restriction is to limit the allowed gates, which is the point of view we take in this paper.
For example, it is easy to see that circuits composed of only 1-qubit gates
can be simulated in deterministic polynomial time by following each qubit independently.
We call such circuits \emph{degenerate}.

Other than this trivial example, there are basically two known classes of gates that compose circuits admitting efficient classical simulations.
The first is Clifford gates, which get their name by a connection with Clifford algebras.
Circuits using only these gates are called \emph{stabilizer} circuits because the Clifford gates stabilize the group of Pauli matrices.
The efficient classical simulation of Clifford gates is known as the Gottesman-Knill Theorem~\cite{Got99},
which now has several alternative proofs~\cite{AG04, AB06, vdN10, JN14}.
The second is \emph{matchgates} with nearest neighbor interactions.
Valiant \cite{Val02} gave an efficient classical simulation of this class by reducing to counting perfect matchings in planar graphs,
which is computable in deterministic polynomial time due to the FKT algorithm~\cite{TF61, Kas67}.
For the special case of 2-qubit matchgates with even support,
many alternative proofs of tractability and characterizations were found~\cite{TD02, Kni01, Joz08, JKMW10, vdN11}.

Valiant further developed the idea of matchgates, resulting in holographic algorithms \cite{Val08},
which give a number of surprising algorithms to problems that were not known to be tractable previously. 
Holographic algorithms have been further developed and generalized extensively \cite{Val06, CL11, CLX13a, CGW14a, Chen16, Xia16}.
To understand the surprising power of holographic algorithms, the Holant framework was proposed by Cai, Lu, and Xia~\cite{CLX11d}.
The exact definition of the Holant framework is beyond this paper, and here we will just give a simple description.
Details can be found in \cite{CLX11d,CGW16}.
Roughly speaking, a Holant problem is to evaluate the contraction of a tensor network,
and its computational complexity is determined by the set of tensors we allow.
Equivalently, it is a constraint satisfaction problem defined on a graph,
where we put (constraint) functions on vertices and edges are variables, 
and the goal is to evaluate the total sum over all possible assignments on edges of the product weights of functions on vertices.
There has been considerable success towards classifying the computational complexity of Holant problems~\cite{CLX11d, CHL12, HL16, CGW16, CFGW15}.
These classification results provide us, under various settings, thorough understandings of what class of problems is polynomial time
tractable and what is \numP-hard.

The original motivation of matchgates~\cite{Val02} was to classically simulate quantum circuits.
Under the Holant framework, the matchgates correspond to a special case of
 the known tractable classes \cite{CFGW15}.
However, it is natural to wonder, whether the other class of
 quantum circuits that admits efficient simulation,
namely Clifford gates / stabilizer circuits, is connected to Holant problems as well.

We show that Clifford gates are indeed also a special case of a known tractable class, called affine signatures \cite{CLX14,CGW16}.
The quantum circuit model requires us to view a function (or tensor) as a unitary gate,
namely that it has the same number of inputs and outputs, and the matrix representation is unitary.
In fact, we show that the Clifford gates are exactly the class of affine signatures under these two restrictions,
namely, with specified inputs and outputs, and a corresponding unitary matrix representation (see Theorem \ref{thm:equiv}).\footnote{It turns out that for affine signatures, unitary matrix requirements are equivalent to requiring simply non-singularity. See Corollary \ref{cor:nonsingluar:unitary}.}
Since affine signatures are explicitly defined, 
our result gives an alternative and explicit characterization of 
Clifford gates. 

Classifying the classical complexity of quantum circuits is a well-known open problem (see Question 2 of \cite{Aar05}).
Valiant \cite{Val02} has shown how to reduce from computing the marginal probability of a certain output for a quantum circuit,
to evaluating a corresponding Holant problem.
Thus, tractable Holant problems can potentially provide new simulation algorithms.
For example, the efficient simulation of Clifford gates can be obtained through the polynomial-time algorithm for affine signatures \cite{CLX14},
which provides yet another proof of the Goettesman-Knill theorem \cite{Got99}.
However, the study of Holant problems has been mostly focused on symmetric functions \cite{CGW16, CFGW15}, and with the unitary matrix restriction, 
known tractable families become exactly those corresponding to degenerate, matchgates, and stabilizer circuits.
Despite some recent progress \cite{LW16, CLX17, Backens17}, 
a general dichotomy theorem for Holant problems remains elusive.
We hope that studies in this direction can shed some light in classifying quantum circuits in the future.

The results reported here were obtained in 2012,
and the materials had been presented at 
the Simons Institute for the Theory of Computing.
Recently Backens \cite{Backens17} made some new connections between quantum computation and Holant problems, 
and used them to obtain a new dichotomy result under the Holant framework.
In light of this new exciting connection, we are encouraged to publish 
these results.

\section{Definitions and the Main Result} \label{sec:affine}

We first introduce several definitions so that we can state our main theorem.
The two basic objects we look at are affine signatures, which are a tractable class of constraint functions in the Holant framework~\cite{CLX14},
and Clifford gates, which compose stabilizer circuits that can be efficiently and classically simulated.

\subsection{Affine Signatures} \label{subsec:affine_def}

\begin{definition} \label{def:affine_sigs}
 A $k$-ary function $f(x_1, \dotsc, x_k)$ is \emph{affine} if it has the form
 \[\lambda \chi_{A \mathbf{x} = 0} \cdot \sqrt{-1}^{\sum_{j=1}^n \langle \boldsymbol{\alpha_j}, \mathbf{x} \rangle}\]
 where $\lambda \in \mathbb{C}$, $\mathbf{x} = \transpose{(x_1, x_2, \dotsc, x_k, 1)}$, $A$ is matrix over $\mathbb{F}_2$,
 $\boldsymbol{\alpha_j}$ is a vector over $\mathbb{F}_2$, and $\chi$ is a 0-1 indicator function such that $\chi_{A \mathbf{x} = 0}$ is 1 iff $A \mathbf{x} = 0$.
 Note that the dot product $\langle \boldsymbol{\alpha_j}, \mathbf{x} \rangle$ is calculated over $\mathbb{F}_2$,
 while the summation $\sum_{j=1}^n$ on the exponent of $i = \sqrt{-1}$ is evaluated as a sum mod 4 of 0-1 terms.
 We use $\mathscr{A}$ to denote the set of all affine functions.
\end{definition}

In other words, the affine signatures are, up to a scalar,
the signatures with affine support whose nonzero entries are expressible as $i$ raised to a sum of linear indicator functions.
This alternative view is the original definition of the affine signatures \cite[Definition 3.1]{CLX14}.
For the indicator function $\chi_{A \mathbf{x} = 0}$, we also write $\chi_{A \mathbf{x}}$ for brevity.

A sum of 0-1 indicator functions defined by the affine linear functions $L_j$ atop $i$ has an alternate form.
We can express $\lambda i^{\sum_{j=1}^n \langle \boldsymbol{\alpha_j}, \mathbf{x} \rangle}$ as $\lambda' i^{Q(\mathbf{x})}$,
where $Q$ is a homogeneous quadratic polynomial over $\mathbb{Z}$
with the additional requirement that every cross term $x_j x_\ell$ with $j \ne \ell$ has an even coefficient.
To see this, we observe that the 0-1 indicator function of an affine linear function $L(\mathbf{x})$ can be replaced by $(L(\mathbf{x}))^2$, 
and $L(\mathbf{x}) = 0,1 \pmod{2}$ if and only if $(L(\mathbf{x}))^2 = 0,1 \pmod{4}$, respectively.
Then every cross term has an even coefficient and $i^{x_j}$ can be replaced by $i^{x_j^2}$.
Conversely, we can express $Q \bmod{4}$ as a sum of squares of linear forms of $\mathbf{x}$ using the extra condition that all cross terms have an even coefficient.
We utilize this definition in the present work.

From the relation $A \mathbf{x} = 0$, where $\mathbf{x} = \transpose{(x_1, x_2, \dotsc, x_k, 1)}$,
we can pick a maximal set of free variables, such that all other variables can be expressed by their affine linear sums mod~2.
Say we have $\mathbf{x''} = A' \mathbf{x'} + \mathbf{b}$ and there are $r$ variables in $\mathbf{x'}$.
This is a relation in $\mathbb{F}_2$.
We can use it to replace variables $\mathbf{x''}$ by $\mathbf{x'}$ in the expression $\lambda' i^{Q(\mathbf{x})}$.
This is valid because for every symmetric integer matrix $S$,
if $\mathbf{x} \equiv \mathbf{y} \pmod{2}$,
then $\transpose{\mathbf{x}} S \mathbf{x} \equiv \transpose{\mathbf{y}} S \mathbf{y} \pmod{4}$.
After replacing $i^{x_j}$ by $i^{x_j^2}$, and absorbing a constant term $i^c$,
we may assume the function is $\mu \chi_{A \mathbf{x} =0} i^{\transpose{\mathbf{x'}} C \mathbf{x'}}$,
where $C$ is an $r$-by-$r$ symmetric matrix, and $\mu$ has the same norm as $\lambda$.


\subsection{Unitary Affine Signatures}

In order to state what makes a signature a valid quantum operation,
we define a mapping from a signature of even arity to the matrix that 
represents it as its (weighted) truth table.

\begin{definition} \label{def:sig_matrix}
 For a signature $f$ of arity $2n$ on Boolean inputs $x_1, \dotsc, x_{2n}$,
 the \emph{signature matrix} of $f$ is the $2^n$-by-$2^n$ matrix $M_f$ where
 the entry in the row indexed by $(x_1, \dotsc, x_n) \in \{0, 1\}^n$ and column indexed by $(x_{2n}, \dotsc, x_{n+1}) \in \{0, 1\}^n$ is $f(x_1, \dotsc, x_{2n})$.
\end{definition}

\begin{figure}[hbtp]
  \begin{minipage}[t]{.45\textwidth}
    \begin{center}
      \begin{tikzpicture}[scale=0.55,transform shape,semithick]
        \draw (0,0) node [] (0) {};
        \draw (0,3) node [] (4) {};
        \draw (-2,0) -- (-0.52,0);
        \draw (-2,2) -- (-0.52,2);
        \draw (-2,3) -- (-0.52,3);
        \draw (-1.5,0.75) node [] {\Huge $\vdots$};
        \draw (-1.5,1.25) node [] {\Huge $\vdots$};
        \draw (2,0) -- (0.52,0);
        \draw (2,2) -- (0.52,2);
        \draw (2,3) -- (0.52,3);
        \draw (1.5,0.75) node [] {\Huge $\vdots$};
        \draw (1.5,1.25) node [] {\Huge $\vdots$};

        \draw (-2.5,3) node [] {\Large $x_1$};
        \draw (-2.5,2) node [] {\Large $x_2$};
        \draw (-2.5,0) node [] {\Large $x_n$};
        
        \draw (2.4,3) node [] {\Large $x_{2n}$};
        \draw (2.6,2) node [] {\Large $x_{2n-1}$};
        \draw (2.6,0) node [] {\Large $x_{n+1}$};

        \draw (0,1.5) node [] {\LARGE $f$};

        \begin{pgfonlayer}{background}
          \node[draw=black,thick,rounded corners,fit = (0) (4),inner sep=6pt,transform shape=false] {};
        \end{pgfonlayer}
      \end{tikzpicture}
    \end{center}
    \caption{Labelling a signature.}
	\label{fig:labelling}
  \end{minipage}
  \hspace{-0.8cm}%
  \begin{minipage}[t]{.65\textwidth}
	\begin{center}
     \begin{tikzpicture}[scale=0.55,transform shape,semithick]
        \draw (0,0) node [] (0) {};
        \draw (0,3) node [] (4) {};
        \draw (4,0) node [] (g0) {};
        \draw (4,3) node [] (g4) {};

        \draw (-1.7,0) -- (-0.52,0);
        \draw (-1.7,2) -- (-0.52,2);
        \draw (-1.7,3) -- (-0.52,3);
        \draw (-1.3,0.75) node [] {\Huge $\vdots$};
        \draw (-1.3,1.25) node [] {\Huge $\vdots$};
        \draw (3.48,0) -- (0.52,0);
        \draw (3.48,2) -- (0.52,2);
        \draw (3.48,3) -- (0.52,3);
        \draw (2,0.75) node [] {\Huge $\vdots$};
        \draw (2,1.25) node [] {\Huge $\vdots$};
        \draw (5.7,0) -- (4.52,0);
        \draw (5.7,2) -- (4.52,2);
        \draw (5.7,3) -- (4.52,3);
        \draw (5.3,0.75) node [] {\Huge $\vdots$};
        \draw (5.3,1.25) node [] {\Huge $\vdots$};

        \draw (-2.2,3) node [] {\Large $x_1$};
        \draw (-2.2,2) node [] {\Large $x_2$};
        \draw (-2.2,0) node [] {\Large $x_n$};
        
        \draw (1,3.3) node [] {\Large $x_{2n}$};
        \draw (1.2,2.3) node [] {\Large $x_{2n-1}$};
        \draw (1.1,0.3) node [] {\Large $x_{n+1}$};

        \draw (3.2,3.3) node [] {\Large $y_{1}$};
        \draw (3.2,2.3) node [] {\Large $y_{2}$};
        \draw (3.2,0.3) node [] {\Large $y_{n}$};

        \draw (6.1,3) node [] {\Large $y_{2n}$};
        \draw (6.3,2) node [] {\Large $y_{2n-1}$};
        \draw (6.2,0) node [] {\Large $y_{n+1}$};

        \draw (0,1.5) node [] {\LARGE $f$};
        \draw (4,1.5) node [] {\LARGE $g$};
        \draw (2,4.5) node [] {\LARGE $h$};

        \begin{pgfonlayer}{background}
          \node[draw=black,thick,rounded corners,fit = (0) (4),inner sep=6pt,transform shape=false] {};
          \node[draw=black,thick,rounded corners,fit = (g0) (g4),inner sep=6pt,transform shape=false] {};
          \node[inner sep=12pt,transform shape=false,draw=\borderColor,thick,rounded corners,fit = (0) (4) (g0) (g4) ] {};
        \end{pgfonlayer}
      \end{tikzpicture}
	\end{center}
    \caption{Sequential composition of signatures.}
	\label{fig:seq}
  \end{minipage}
\end{figure}

The first $n$ inputs of $f$ are the $n$ inputs of $M_f$ and index a row while the last $n$ inputs of $f$ are the $n$ outputs of $M_f$ and index a column.
We assume that the variables are labelled counter-clockwise (see Fig.~\ref{fig:labelling}).
This requires us to reverse the order of the bits corresponding to the column index.
This is for convenience so that the signature matrix of the signature resulting from the direct linking of two arity $2n$ signatures
along a particular sequence of $n$ inputs is the matrix product of the signature matrices of the two signatures.
More precisely, for two arity $2n$ signatures $f$ and $g$ on inputs $x_1, \dotsc, x_{2n}$ and $y_1, \dotsc, y_{2n}$ respectively, 
the identification of $x_{2n+1-k} = y_k$ for $1 \le k \le n$ gives another arity $2n$ signature $h$ (see Fig.~\ref{fig:seq}) such that
\begin{align*}
  h(x_1, \dotsc, x_n, y_{n+1}, \dotsc, y_{2n})
  = \sum_{x_{n+1} = y_n, \dotsc, x_{2n} = y_1 \in \{0,1\}} f(x_1, \dotsc, x_{2n}) g(y_1, \dotsc, y_{2n}),
\end{align*}
and $M_h = M_f M_g$.

Using Definition~\ref{def:sig_matrix},
we define the set of signature matrices of $\mathscr{A}$ with $n$ inputs
 and $n$ outputs as
\begin{align*}
 \mathcal{A}_n = \{M_f \mid f \in \mathscr{A} \text{ and } \arity(f) = 2n\},
\end{align*}
and the subset of these signature matrices that are also unitary as
\begin{align*}
 \mathcal{UA}_n = U(2^n) \cap \mathcal{A}_n,
\end{align*}
where $U(k)$ is the group of $k$-by-$k$ unitary matrices.
Recall that a matrix $M$ is unitary if $M M^* = I$, where $M^*$ is the conjugate transpose of $M$.
For a matrix to be unitary, it must be at least nonsingular, so we also define
\begin{align*}
 \mathcal{GA}_n = GL_{2^n}(\mathbb{C}) \cap \mathcal{A}_n,
\end{align*}
where $GL_{k}(\mathbb{C})$ is the group of $k$-by-$k$ nonsingular matrices with complex entries.

In Lemma~\ref{lem:affine:unitary}, we prove that for every $M_f \in \mathcal{GA}_n$,
there exists a scalar $\lambda$ (the ``correct'' choice of scaling) such that $M_{\lambda f} \in \mathcal{UA}_n$.
The reverse direction $\mathcal{UA}_n \subseteq \mathcal{GA}_n$ is obvious.

\subsection{Clifford Gates} \label{sec:clifford}

The \emph{Pauli matrices} on a single qubit are
\[
 I = \begin{bmatrix} 1 &  0 \\ 0 &  1 \end{bmatrix},
 \qquad
 X = \begin{bmatrix} 0 &  1 \\ 1 &  0 \end{bmatrix},
 \qquad
 Y = \begin{bmatrix} 0 & -i \\ i &  0 \end{bmatrix},
 \qquad
 Z = \begin{bmatrix} 1 &  0 \\ 0 & -1 \end{bmatrix}.
\]
These matrices satisify
\[X^2 = Y^2 = Z^2 = I\]
and
\begin{alignat*}{3}
 X Y &=  i Z &       Y Z &=  i X &       Z X &=  i Y \\
 Y X &= -i Z &\qquad Z Y &= -i X &\qquad X Z &= -i Y.
\end{alignat*}
In particular, every two Pauli matrices either commute or anticommute.
There is an isomorphism with the quaternions if we send $(I,iZ,iY,iX)$ to $(1,i,j,k)$.

On $n$ qubits, the set of Pauli matrices is $P_n = \{\sigma_1 \otimes \dotsb \otimes \sigma_n \mid \sigma_k \in \{I, X, Y, Z\}\}$.
The group generated by these Pauli matrices forms the \emph{Pauli group} $\mathcal{P}_n = \langle P_n \rangle$.
This Pauli group can also be defined as simply the Pauli matrices together with a multiplicative factor of $\pm 1$ or $\pm i$.
A $2^n$-by-$2^n$ matrix $P$ is a matrix in $\mathcal{P}_n$ if and only if the entry $P(\mathbf{x},\mathbf{y})$ can be expressed as
\begin{align}\label{Pauli-entry-expression}
 P(\mathbf{x}, \mathbf{y}) = i^c \cdot \xi(\mathbf{x}, \mathbf{y}) \cdot(-1)^{\transpose{\mathbf{r}} \mathbf{x}}
\end{align}
where $c$ is a constant, $\mathbf{r} \in \mathbb{F}_2^n$ is a vector,
and $\xi$ is an indicator function of the linear system
determined by a sequence of 0-1 values $\epsilon_i \in \mathbb{F}_2$
such that for each $i \in [n]$, $x_i - y_i = \epsilon_i$ over $\mathbb{F}_2$.
We can rewrite the constraint induced by $\xi$ as
\begin{align}
 \mathbf{x} - \mathbf{y} = \mathbf{e}, \label{eqn:pauli:support}
\end{align}
where $\mathbf{e} \in \mathbb{F}_2^n$ is a constant vector determined by $P$.

\begin{definition}
 The \emph{Clifford group} $\mathcal{C}_n$ on $n$ qubits is
 \[\mathcal{C}_n = \{U \in U(2^n) \mid \sigma \in \mathcal{P}_n \implies U \sigma U^* \in \mathcal{P}_n\} / U(1).\]
\end{definition}

Here $U(1)$ is the group of complex numbers of norm one, $\{z \in \mathcal{C} \mid |z| = 1\}$.
It is considered as a subgroup by identifying $z$ with $z I_{2^n}$.
Alternatively, we can also consider that $\mathcal{C}_n$ is defined as
$\{U \in U(2^n) \mid [|\det(U)| = 1] \text{ and } [\sigma \in \mathcal{P}_n \implies U \sigma U^* \in \mathcal{P}_n] \}$.
In words, the Clifford group on $n$ qubits is, up to a norm~1 scalar,
the set of $2^n$-by-$2^n$ unitary matrices that stablizes the Pauli group $\mathcal{P}_n$.
Thus, quantum circuits composed by gates from the Clifford group are known as stabilizer circuits.

Having defined both (unitary) affine signatures and Clifford gates,
we are now ready to state our main theorem.

\begin{theorem} \label{thm:equiv}
  The set of non-singular affine signatures is exactly the set of Clifford gates,
  namely, $\mathcal{GA}_n /U(1) = \mathcal{C}_n$.
\end{theorem}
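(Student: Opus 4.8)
The plan is to reduce to the unitary case and then prove two inclusions. By Lemma~\ref{lem:affine:unitary}, every matrix in $\mathcal{GA}_n$ becomes unitary after multiplication by a suitable scalar, and the reverse containment $\mathcal{UA}_n \subseteq \mathcal{GA}_n$ is immediate; hence $\mathcal{GA}_n/U(1) = \mathcal{UA}_n/U(1)$, and it suffices to prove $\mathcal{UA}_n/U(1) = \mathcal{C}_n$.

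For the inclusion $\mathcal{C}_n \subseteq \mathcal{UA}_n/U(1)$, I would argue that the affine signatures are closed under the two operations out of which the Clifford group is built. First, $\mathscr{A}$ is closed under tensor product (parallel composition): the support of $f \otimes g$ is a product of affine varieties, hence affine, and the exponent is the sum of two quadratic forms, so $f \otimes g$ is again of the shape $\mu' \chi_{A'\mathbf{x}=0}\, i^{Q'(\mathbf{x})}$. Second, $\mathscr{A}$ is closed under the sequential composition of Figure~\ref{fig:seq}, i.e.\ under the matrix product $M_f M_g$; here the intermediate variables are summed over an affine subspace of $\mathbb{F}_2$, and the resulting Gauss sum $\sum i^{Q}$ over an affine space evaluates either to $0$ or to a power of $i$ times a real scaling, keeping the composite in $\mathscr{A}$. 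With these closures in hand it remains to place a generating set of $\mathcal{C}_n$ inside $\mathcal{UA}_n$: taking the standard Clifford generators, the Hadamard gate $H$, the phase gate $S = \diag(1,i)$, and the controlled-NOT, one checks directly from Definition~\ref{def:affine_sigs} that each has affine support with entries that are powers of $i$, so each is a unitary affine signature. Since these generate $\mathcal{C}_n$ up to a global phase, every Clifford gate lies in $\mathcal{UA}_n/U(1)$.

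For the reverse inclusion $\mathcal{UA}_n/U(1) \subseteq \mathcal{C}_n$, I would show that every unitary affine signature normalizes the Pauli group. First, $\mathcal{P}_n \subseteq \mathcal{UA}_n$: the entrywise formula~\eqref{Pauli-entry-expression} together with the support condition~\eqref{eqn:pauli:support} exhibits each Pauli matrix as an affine signature, since its support $\mathbf{x} - \mathbf{y} = \mathbf{e}$ is affine and its nonzero entries are fourth roots of unity. Now fix $U = M_f \in \mathcal{UA}_n$ and $\sigma \in \mathcal{P}_n$. Using closure of $\mathscr{A}$ under matrix product and under conjugate transpose (complex conjugation negates the exponent of $i$ and transposition is an affine relabelling of variables), the product $U \sigma U^*$ is again a unitary affine signature. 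The crux is to upgrade this to membership in $\mathcal{P}_n$: one must show that $U \sigma U^*$ is not merely affine and unitary but has the precise monomial form, namely support $\{(\mathbf{x},\mathbf{y}) : \mathbf{x} - \mathbf{y} = \mathbf{e}\}$ for a fixed shift $\mathbf{e}$ with entries of the form~\eqref{Pauli-entry-expression}.

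I expect this last step to be the main obstacle. Being a monomial affine unitary is strictly weaker than being a Pauli matrix: the controlled-NOT is a monomial affine unitary whose underlying permutation is an affine map with linear part $M \ne I$, and it is Clifford but not Pauli, so neither counting nonzero entries nor using $\sigma^2 = I$ can suffice. Instead I would compute the $(\mathbf{x},\mathbf{y})$ entry of $U \sigma U^*$ explicitly as a sum over the affine support of $f$, reduce the inner sum to a quadratic Gauss sum over $\mathbb{F}_2$, and read off that the surviving support is exactly one coset $\mathbf{x} - \mathbf{y} = \mathbf{e}$ with phase matching~\eqref{Pauli-entry-expression}; the decisive input is the unitarity of $U$, which forces the quadratic form in the exponent of $f$ to interact with the linear shift of $\sigma$ so that all but a single diagonal of the product cancels. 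Assembling the two inclusions gives $\mathcal{GA}_n/U(1) = \mathcal{UA}_n/U(1) = \mathcal{C}_n$, which is Theorem~\ref{thm:equiv}.
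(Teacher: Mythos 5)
Your proposal is correct and takes essentially the same route as the paper: the reduction of $\mathcal{GA}_n$ to the unitary case via Lemma~\ref{lem:affine:unitary} is the paper's Corollary~\ref{cor:nonsingluar:unitary}, the generators-plus-closure argument for $\mathcal{C}_n \subseteq \mathcal{GA}_n/U(1)$ is the paper's Lemma~\ref{lem:clifford:inclusion}, and the explicit entrywise Gauss-sum evaluation of $M_f \sigma M_f^*$ is exactly how the paper proves Lemma~\ref{lem:affine:inclusion}. The step you flag as the main obstacle---that the surviving support of $U\sigma U^*$ is a single coset $\mathbf{u}-\mathbf{v}=\mathbf{t}$ with Pauli phase---is closed in the paper by the second conclusion of the very lemma you cite: Lemma~\ref{lem:affine:unitary} also shows that $C_f$ (the support matrix $A$ stacked on the cross-term matrix $C_3$) is nonsingular over $\mathbb{F}_2$, so the linear system characterizing the nonzero entries of $U\sigma U^*$ has a unique solution $\mathbf{t}$, which is precisely your ``single diagonal,'' and the remaining phase computation matches \eqref{Pauli-entry-expression}.
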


In particular, Theorem \ref{thm:equiv} implies that all non-singular affine signatures are unitary (see Corollary~\ref{cor:nonsingluar:unitary}).

\section{Closure of Affine Signatures}

We will show a closure property of affine signatures (see Lemma \ref{lem:affine:simple-operations}).
This property, together with the fact that the Clifford gates are generated by three elements (all of which are affine),
implies one direction of Theorem \ref{thm:equiv}.

Affine signatures $\mathscr{A}$ are closed under four basic types of operations.

\begin{lemma} \label{lem:affine:simple-operations}
 If $f(x_1, \dotsc, x_n), g(y_1, \dotsc, y_m) \in \mathscr{A}$,
 then so are
 \begin{enumerate}
   \item \label{case:tensor} $(f \otimes g)(x_1, \dotsc, x_n,  y_1, \dotsc, y_m) = f(x_1, \dotsc, x_n) g(y_1, \dotsc, y_m)$,
   \item \label{case:permute} $f(x_{\sigma(1)}, \dotsc, x_{\sigma(n)})$ for any permutation $\sigma \in S_n$,
   \item \label{case:identify} $f^{x_j = x_\ell} = f(x_1, \dotsc, x_{j - 1},
 x_{\ell}, x_{j + 1}, \dotsc, x_n)$, where we set the variable $x_j$ to be
equal to $x_{\ell}$,  and
   \item \label{case:marginal} $f^{x_j = *} = 
\sum_{x_j = 0,1} f(x_1, \dotsc, x_j, \dotsc, x_n)$.
 \end{enumerate}
\end{lemma}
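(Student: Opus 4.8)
The plan is to verify each of the four operations directly, carrying along the explicit normal form $f = \mu\,\chi_{A\mathbf{x}=0}\, i^{\transpose{\mathbf{x}'} C \mathbf{x}'}$ established above (equivalently the original form $\lambda\,\chi_{A\mathbf{x}=0}\,i^{\sum_j \langle \boldsymbol{\alpha_j},\mathbf{x}\rangle}$), and checking at each step that the two defining features survive: the support stays an affine variety over $\mathbb{F}_2$, and the phase stays $i$ raised to a quadratic form whose cross terms all have even coefficients. Operations \ref{case:tensor}--\ref{case:identify} are structural. For the tensor product I would place the two linear systems in block-diagonal form (so the joint support is still affine), add the two phase exponents, and multiply the scalars; since the two variable blocks are disjoint no new cross terms are created, so the even-cross-term condition is vacuously preserved. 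For a permutation I would simply relabel the columns of $A$ and the entries of the $\boldsymbol{\alpha_j}$ (or of $C$). For the identification $x_j = x_\ell$ I would perform the $\mathbb{F}_2$-linear substitution $x_j \mapsto x_\ell$ in both $A$ and the quadratic form: a diagonal term $x_j^2$ becomes $x_\ell^2 = x_\ell$ and a cross term $2c\,x_j x_m$ becomes $2c\,x_\ell x_m$, so the support stays affine and all cross terms stay even.

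The substantive case is the marginalization in Case \ref{case:marginal}, and this is where I expect the real work to be. I would first split on whether the support $A\mathbf{x}=0$ determines $x_j$ from the other variables. If it does (i.e.\ $x_j$ is a dependent variable), then for each setting of the remaining variables exactly one value of $x_j$ lies in the support, the sum collapses to a single surviving term, and substituting the forced affine expression for $x_j$ reduces the situation to Case \ref{case:identify}. If instead $x_j$ is a free variable, both values $x_j=0,1$ genuinely contribute, and I would isolate the dependence of the phase on $x_j$. Using $x_j^2=x_j$ for $x_j\in\{0,1\}$ and writing the off-diagonal contribution of $x_j$ as $2x_j\ell_0(\mathbf{x}_{-j})$, the symmetric form splits as
\[
 Q(\mathbf{x}) = c_{jj}\,x_j + 2\,x_j\,\ell_0(\mathbf{x}_{-j}) + Q'(\mathbf{x}_{-j}),
\]
where $Q'$ omits $x_j$, so that
\[
 \sum_{x_j=0,1} i^{Q(\mathbf{x})} = i^{Q'(\mathbf{x}_{-j})}\bigl(1 + i^{c_{jj}}(-1)^{\ell}\bigr),
\]
with $\ell = \ell_0 \bmod 2$ an affine $\mathbb{F}_2$-linear form in the remaining variables; the new support is the (still affine) projection that drops the $x_j$ coordinate.

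The conclusion is then read off from a case analysis on $c_{jj}\bmod 4$, which is the step I expect to be the main obstacle. When $c_{jj}$ is even the factor equals $2\,\chi_{\ell=b}$ for the appropriate $b\in\{0,1\}$, so marginalization merely appends one more affine constraint to the support while the constant $2$ is absorbed into the scalar, and the result is clearly affine. When $c_{jj}$ is odd the factor equals $1\pm i = \sqrt{2}\,e^{\pm i\pi/4}$, which is alarming at first glance because $e^{\pm i\pi/4}$ is a primitive eighth root of unity rather than a power of $i$, and so threatens to leave the class $\mathscr{A}$. The observation that rescues the argument is that this eighth-root-of-unity phase is \emph{variable-independent}: writing the factor as $\sqrt{2}\,e^{\pm i\pi/4}\, i^{\mp \ell}$, the constant $\sqrt{2}\,e^{\pm i\pi/4}$ is absorbed into the overall scalar $\mu$ (which the definition of $\mathscr{A}$ allows to be any complex number), while the only variable-dependent part, $i^{\mp \ell}$, is a genuine power of $i$ with exponent linear in the remaining variables. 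Recombining with $i^{Q'}$ and replacing the indicator $\ell$ by the square of its affine linear form (as in the normal-form discussion above) returns an integer quadratic form all of whose cross terms remain even. In every case the marginalized signature has affine support, a phase of the required form, and a single complex scalar, hence lies in $\mathscr{A}$.
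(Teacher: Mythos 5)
Your proof is correct, and in the one case that actually requires work it takes a genuinely different route from the paper. For Cases \ref{case:tensor}--\ref{case:identify} and for the dependent-variable subcase of Case \ref{case:marginal}, your argument matches the paper's (block form for the tensor product, relabelling for permutations, substitution/added equation for identification, and collapse of the sum plus mod-2 substitution when the support determines $x_j$). The divergence is in the free-variable subcase of marginalization: the paper does not prove this subcase at all, but instead defers to the proof of Theorem 4.1 of the cited affine-signature paper \cite{CLX14}, remarking that this is ``why the affine signatures are tractable in the first place.'' You instead give a self-contained computation: split off the $x_j$-dependence of the quadratic form as $c_{jj}x_j + 2x_j\ell_0$, sum over $x_j$ to get the factor $1 + i^{c_{jj}}(-1)^{\ell}$, and do a case analysis on $c_{jj} \bmod 4$ --- for even $c_{jj}$ the factor is $2\chi_{\ell = b}$, which appends an affine constraint to the support, and for odd $c_{jj}$ the factor is $\sqrt{2}\,e^{\pm i\pi/4}\,i^{\mp\ell}$, where the eighth root of unity is variable-independent and absorbed into the scalar while $i^{\mp\ell}$ is a legitimate power of $i$ with affine exponent (e.g.\ $i^{-\ell} = i^{3\ell}$ for $\ell \in \{0,1\}$, re-expressible via $L^2$ to keep cross terms even). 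This is essentially the Gauss-sum argument underlying the cited tractability proof, reproduced explicitly; your version buys self-containedness at the cost of length, while the paper's citation keeps the lemma short and highlights that marginalization-closure and tractability are the same phenomenon. Your handling of the $1 \pm i$ phase --- the point a naive attempt would stumble on, since it is not a power of $i$ --- is exactly right.
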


\begin{proof}
 In the following, let $f = \lambda \chi_{A \mathbf{x}} i^{Q(\mathbf{x})}$ and $g = \mu \chi_{B \mathbf{y}} i^{P(\mathbf{y})}$,
 where $\mathbf{x} = (x_1, \dotsc, x_n, 1)$, $\mathbf{y} = (y_1, \dotsc, y_n, 1)$,
 and the functions $Q$ and $P$ are homogeneous quadratic polynomials over $\mathbb{Z}$ with even coefficients on cross terms.
 
 \begin{enumerate}
  \item The signature $f \tensor g$ is also affine since
  \begin{align*}
   f(\mathbf{x}) \tensor g(\mathbf{y})
   = \lambda \chi_{A \mathbf{x}} i^{Q(\mathbf{x})} \mu \chi_{B \mathbf{y}} i^{P(\mathbf{y})}
   = \lambda \mu \chi_{C \mathbf{z}} i^{Q(\mathbf{x}) + P(\mathbf{y})},
  \end{align*}
  where $C = \begin{bmatrix} A \\ B \end{bmatrix}$ and $\mathbf{z} = \begin{bmatrix} \mathbf{x} \\ \mathbf{y} \end{bmatrix}$.
  
  \item This is obvious.
 
  \item We simply add the affine equation $x_j = x_\ell$ to the linear system $A \mathbf{x} = 0$ of constraints that defines the affine support.
  Therefore, the resulting signature is also affine.
  Furthermore, we can decrease the arity by replacing all occurrences 
of  $x_j$ with $x_\ell$.
 
  \item If there is a nontrivial equation in $A \mathbf{x} = 0$ involving $x_j$,
  then for any setting of the other variables,
  there is at most one value for $x_j$ such that the resulting $\mathbf{x}$ vector is a solution to $A \mathbf{x} = 0$.
  Thus in place of the sum over every possible value for $x_j$,
  we use this nontrivial equation to replace $x_j$ in both $A \mathbf{x} = 0$ and $Q$.
As noted earlier, for $Q(\mathbf{x}) = \transpose{ \mathbf{x}} S \mathbf{x}$
for some symmetric $S$, the substitution of $x_j$ by a mod 2 expression
is valid.
  Then the resulting function is also affine.
 
  Otherwise, there is no nontrivial equation involving $x_j$.
  In this case, the fact that $f^{x_j=*}$ is also affine essentially corresponds to why the affine signatures are tractable in the first place.
  See the proof of \cite[Theorem 4.1]{CLX14},
  which uses the original definition of the affine signatures where the exponent of $i$ is a sum of linear indicator functions. \qedhere
 \end{enumerate}
\end{proof}



On the other hand, the Clifford group is generated by (essentially) three elements.
Let
\begin{align*}
  H    := \frac{1}{\sqrt{2}} \begin{bmatrix} 1 & 1 \\ 1 & -1 \end{bmatrix};
  \qquad
  P    :=                    \begin{bmatrix} 1 & 0 \\ 0 &  i \end{bmatrix};
  \qquad
  CNOT :=  \begin{bmatrix} 1 & 0 & 0 & 0 \\ 0 & 1 & 0 & 0 \\ 0 & 0 & 0 & 1 \\ 0 & 0 & 1 & 0 \end{bmatrix}.  
\end{align*}
Furthermore, let $H_j$, $P_j$, and $CNOT_{jk}$ be the gate that acts on the subscript qubits (either on $j$, or on $j$ and $k$).
In other words, $H_j = I\otimes \dots \otimes H \otimes \dots I $ where $I$ is the $2$-by-$2$ identity matrix and $H$ is at the $j$th coordinate.
$P_j$ can be expressed similarly, and $CNOT_{jk}$ can be viewed as the gate $CNOT\otimes I\otimes\dots\otimes I$ after
 permuting qubits $1$ to $j$ and $2$ to $k$.

\begin{theorem}[Page~13 in~\cite{Got98}] \label{thm:clifford:generators}
 $\mathcal{C}_n = \langle H_j, P_j, CNOT_{jk} \rangle / U(1)$.
\end{theorem}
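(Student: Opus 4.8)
The plan is to prove both inclusions, reducing the harder one to a generation statement for the symplectic group over $\mathbb{F}_2$. The inclusion $\langle H_j, P_j, CNOT_{jk}\rangle / U(1) \subseteq \mathcal{C}_n$ is the easy direction: each generator is unitary, and a direct computation shows that conjugation carries Pauli matrices to Pauli matrices. For example $H X H^* = Z$, $H Z H^* = X$, $P X P^* = Y$, $P Z P^* = Z$, and $CNOT$ sends $X \otimes I \mapsto X \otimes X$ and $I \otimes Z \mapsto Z \otimes Z$ while fixing $Z \otimes I$ and $I \otimes X$. Since $\mathcal{C}_n$ is a group, the subgroup generated by these elements is contained in it.

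For the reverse inclusion I would use the linear-algebraic encoding of the Pauli group. Modulo its center of phases, $\mathcal{P}_n$ is isomorphic to $\mathbb{F}_2^{2n}$, with the class of $\prod_j X_j^{a_j} Z_j^{b_j}$ corresponding to $(\mathbf{a},\mathbf{b})$, and where the commutation relation is recorded by the symplectic form $\omega\big((\mathbf{a},\mathbf{b}),(\mathbf{a}',\mathbf{b}')\big) = \mathbf{a}\cdot\mathbf{b}' + \mathbf{b}\cdot\mathbf{a}'$ over $\mathbb{F}_2$. Conjugation by any $U \in \mathcal{C}_n$ permutes Pauli classes and preserves commutators, hence induces an element $\varphi_U$ of the symplectic group $\operatorname{Sp}(2n, \mathbb{F}_2)$. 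Computing the generators' images, $H_j$ swaps $a_j \leftrightarrow b_j$, $P_j$ is the transvection $b_j \mapsto b_j + a_j$, and $CNOT_{jk}$ performs $a_k \mapsto a_k + a_j$ and $b_j \mapsto b_j + b_k$.

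The central step is to show these induced maps generate all of $\operatorname{Sp}(2n, \mathbb{F}_2)$, which I would prove by Gaussian elimination: using the $CNOT$-type row operations on the $\mathbf{a}$- and $\mathbf{b}$-blocks together with the $H_j$ swaps and $P_j$ shears, one reduces an arbitrary symplectic matrix to the identity and reads off the required word in the generators. Granting this, for any $U \in \mathcal{C}_n$ there is a product $V$ of generators with $\varphi_V = \varphi_U$, so that $W := U V^{-1}$ induces the identity symplectic map; equivalently, $W$ conjugates each $X_j$ and $Z_j$ to itself up to a sign.

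To finish I would clear these signs and the global phase. Using $Z_j X_j Z_j = -X_j$ and $X_j Z_j X_j = -Z_j$, a suitable Pauli operator $Q$ --- itself a product of generators, since $Z_j = P_j^2$ and $X_j = H_j P_j^2 H_j$ --- can be chosen so that $Q W$ conjugates every $X_j$ and $Z_j$ to itself with no sign, and hence commutes with all of $\mathcal{P}_n$. Because $\mathcal{P}_n$ acts irreducibly on $\mathbb{C}^{2^n}$, Schur's lemma forces $QW = \lambda I$ with $\lambda \in U(1)$, so that $U = WV = \lambda Q^{-1} V$ lies in $\langle H_j, P_j, CNOT_{jk}\rangle / U(1)$. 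I expect the symplectic generation step to be the main obstacle, as it carries the combinatorial content of the theorem; once the encoding and the Schur normalization are set up, the surrounding steps are routine.
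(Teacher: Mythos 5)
The paper offers no proof of this statement at all: it is quoted as a known result from Gottesman's work (``Page~13 in~[Got98]''), so there is nothing internal to compare against. Your outline is a correct proof of the quoted fact, and it follows the standard symplectic-representation route --- the one underlying Dehaene--De Moor (cited as [DD03] in this paper) and Aaronson--Gottesman --- rather than Gottesman's original qubit-by-qubit inductive construction. The individual computations you invoke are right: the conjugation table for $H$, $P$, $CNOT$; the induced maps on $\mathcal{P}_n$ modulo phases (swap $a_j \leftrightarrow b_j$, shear $b_j \mapsto b_j + a_j$, and the $CNOT$ action on the $a$- and $b$-blocks); the sign-fixing step, which works because conjugation by products of $X_j$'s and $Z_j$'s realizes every sign pattern on the $X_j, Z_j$ independently, with no consistency obstruction; and the Schur step, since $P_n$ spans all of $M_{2^n}(\mathbb{C})$ and hence acts irreducibly. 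Your closing bookkeeping $U = WV = \lambda Q^{-1}V$ is also sound, and you correctly note $Z_j = P_j^2$, $X_j = H_j P_j^2 H_j$ so the Pauli correction stays inside the generated subgroup. The one substantive debt is exactly the one you flag: the generation of $\operatorname{Sp}(2n,\mathbb{F}_2)$ by these particular elementary maps. That claim is true and the Gaussian-elimination strategy does go through (one needs qubit swaps, obtained as $CNOT_{jk}CNOT_{kj}CNOT_{jk}$, and some care pivoting between the $a$- and $b$-blocks using $H_j$), but as written it is an assertion, not a proof, so your argument is a correct skeleton rather than a complete one. Compared with Gottesman's induction --- which maps the images of $X_1, Z_1$ back to $X_1, Z_1$ by an explicit circuit and recurses on $n-1$ qubits --- your route has the advantage of isolating the reusable structural fact $\mathcal{C}_n/(\mathcal{P}_n \cdot U(1)) \cong \operatorname{Sp}(2n,\mathbb{F}_2)$, at the cost of making the symplectic generation lemma the explicit burden of proof.
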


Firstly, these three basic matrices are signature matrices of affine signatures:
\begin{align*}
 H(x_1, x_2) &= \frac{1}{\sqrt{2}} i^{2 x_1 x_2};\\
 P(x_1, x_2) &= \chi_{x_1 = x_2} i^{x_1^2};\\
 CNOT(x_1, x_2, x_3, x_4) &= \chi_{x_1 = x_4 = x_2 + x_3}.
\end{align*}
By case \ref{case:tensor} of Lemma~\ref{lem:affine:simple-operations}, $H_j$ and $P_j$ are in $\mathcal{GA}_n$.
By case \ref{case:tensor} and \ref{case:permute} of Lemma~\ref{lem:affine:simple-operations}, $CNOT_{jk}$ are also in $\mathcal{GA}_n$.

Furthermore, a matrix multiplication is a combination of operations \ref{case:identify} and \ref{case:marginal} in Lemma~\ref{lem:affine:simple-operations}.
Thus Lemma~\ref{lem:affine:simple-operations} implies that $\langle H_j, P_j, CNOT_{jk} \rangle\subseteq \mathcal{GA}_n$.
By Theorem~\ref{thm:clifford:generators}, we have the following lemma, which is similar to a result in~\cite{DD03}.

\begin{lemma}
  $\mathcal{C}_n \subseteq \mathcal{GA}_n /U(1)$.
  \label{lem:clifford:inclusion}
\end{lemma}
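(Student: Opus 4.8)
The plan is to reduce the inclusion to the generation result of Theorem~\ref{thm:clifford:generators} combined with the closure properties of Lemma~\ref{lem:affine:simple-operations}. First I would verify that each generator $H_j$, $P_j$, and $CNOT_{jk}$ lies in $\mathcal{GA}_n$. The three $2$-qubit building blocks $H$, $P$, and $CNOT$ are affine by the explicit expressions already recorded, and their signature matrices agree with the matrices $H$, $P$, $CNOT$; each is unitary, hence nonsingular. Using closure under tensor product (case~\ref{case:tensor}) — together with the observation that the $2$-by-$2$ identity $I$ is the signature matrix of the affine equality $\chi_{x_1 = x_2}$ — shows that $H_j$ and $P_j$ are affine with nonsingular Kronecker-product signature matrices, so they lie in $\mathcal{GA}_n$. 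Adjoining closure under permutation of variables (case~\ref{case:permute}) places $CNOT$ on the qubit pair $(j,k)$ and handles $CNOT_{jk}$.

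The core step is to show that $\mathcal{GA}_n$ is closed under matrix multiplication. Given affine $f$ and $g$ of arity $2n$, the sequential composition $h$ of Fig.~\ref{fig:seq} satisfies $M_h = M_f M_g$, and $h$ is obtained from $f$ and $g$ by three affineness-preserving steps: the tensor product $f \otimes g$ (case~\ref{case:tensor}), the identifications $x_{2n+1-k} = y_k$ for $1 \le k \le n$ (case~\ref{case:identify}), and the marginalizations over those $n$ shared variables (case~\ref{case:marginal}). Hence $M_f M_g = M_h$ is the signature matrix of an affine signature; since $GL_{2^n}(\mathbb{C})$ is a group, the product is also nonsingular, so $M_f M_g \in \mathcal{GA}_n$.

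Putting these together, every element of the multiplicatively generated group $\langle H_j, P_j, CNOT_{jk} \rangle$ lands in $\mathcal{GA}_n$. Here I would note that although a priori $\mathcal{GA}_n$ is known only to be a monoid under multiplication, each generator has finite order ($H^2 = I$, $P^4 = I$, $CNOT^2 = I$), so the inverse of every generator is itself a product of generators; thus every group element is a finite product of elements of $\mathcal{GA}_n$ and therefore lies in $\mathcal{GA}_n$ by the closure of the previous paragraph. Finally, quotienting the inclusion $\langle H_j, P_j, CNOT_{jk} \rangle \subseteq \mathcal{GA}_n$ by $U(1)$ and invoking $\mathcal{C}_n = \langle H_j, P_j, CNOT_{jk} \rangle / U(1)$ from Theorem~\ref{thm:clifford:generators} yields $\mathcal{C}_n \subseteq \mathcal{GA}_n / U(1)$.

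The main obstacle I anticipate is the closure under matrix multiplication: it is the one step that genuinely uses the variable-identification and marginalization cases of Lemma~\ref{lem:affine:simple-operations}, and it rests on the by-design correspondence $M_h = M_f M_g$ between the matrix product and sequential composition of signatures. The remaining bookkeeping — nonsingularity of the generators and their tensor products, the identity-as-equality observation, and the finite-order argument that upgrades the monoid closure to a statement about the full generated group — is routine but worth stating explicitly so that the $U(1)$ quotient is applied cleanly at the end.
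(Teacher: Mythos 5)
Your proof is correct and follows essentially the same route as the paper: express the generators $H_j$, $P_j$, $CNOT_{jk}$ as affine signatures via cases~\ref{case:tensor} and~\ref{case:permute} of Lemma~\ref{lem:affine:simple-operations}, observe that matrix multiplication is realized by variable identification and marginalization (cases~\ref{case:identify} and~\ref{case:marginal}), and conclude via Theorem~\ref{thm:clifford:generators}. Your two added details --- that the $2\times 2$ identity is itself affine (as $\chi_{x_1=x_2}$) so the tensor construction applies, and that the finite order of the generators upgrades monoid closure to closure of the full generated group --- are points the paper leaves implicit, and stating them makes the argument tighter.
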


\section{Equivalence of Clifford Gates and Unitary Affine Signatures}

At last, we show that the reverse direction of Theorem \ref{thm:equiv} also holds.

For a signature matrix $M_f$, assume the rows are indexed by $\mathbf{x}\in\{0,1\}^n$ and the columns by $\mathbf{y}\in\{0,1\}^n$.
Suppose $M_f$ is nonsingular.
Since $f\in\mathscr{A}$, the support of $f$ is determined by a linear system.
If there is any non-trivial equation involving the $x_i$'s only, there will be a row of entire $0$'s. 
This is a contradiction to the assumption of $M_f$ being nonsingular.
Suppose the support of $f$ has dimension $m$. One can choose
a set of $m$ free variables among $\{x_1, \ldots, x_n,
y_1, \ldots, y_n\}$ such that every variable can be 
expressed as an affine linear combination over $\mathbb{F}_2$
of these $m$ variables.
Among such choices of a set of free variables,  pick one set $S$
with a maximum number of variables from $\{x_1, \ldots, x_n\}$.
We claim that $\{x_1, \ldots, x_n\} \subseteq S$.
Suppose otherwise. There is some  $j \in [n]$ such that $x_j \not \in S$. Then
the expression of $x_j$ in terms of the variables in $S$
must contain some $y_k$ with a non-zero coefficient, or else
there would be a non-trivial equation involving the $x_i$'s only.
This equation can be used to exchange $y_k$ by $x_j$ and get a new 
set of free variables $S'$,
contradicting the maximality of $S$.
Hence $\{x_1, \ldots, x_n\} \subseteq S$.

By renaming the variables $y_1, \ldots, y_n$ if necessary, 
we may assume that for some $0 \le r \le n$,
\[S = \{x_1, \ldots, x_n,  y_1, \ldots, y_r\}.\]
Then we have the following system in $\mathbb{Z}_2$, if $r<n$,
\begin{align*}
  \begin{pmatrix}
	y_{r+1}\\
	y_{r+2}\\
	\vdots\\
	y_{n}
  \end{pmatrix}
  = A
  \begin{pmatrix}
	x_1\\
	x_2\\
	\vdots\\
	x_n
  \end{pmatrix}  
  + B
  \begin{pmatrix}
	y_{1}\\
	y_{2}\\
	\vdots\\
	y_{r}
  \end{pmatrix}
  + \mathbf{b},
\end{align*}
where $A$ is an $(n-r)$-by-$n$ matrix, $B$ is an $(n-r)$-by-$r$ matrix, and $\mathbf{b}$ is a vector of length $n-r$.
(If $r=0$ then the $B$ term disappears.  
If $r=n$ then this linear system is empty.)
We will denote the variables 
$\mathbf{y}'  = \transpose{(y_1, \ldots, y_{r})}$ and
$\mathbf{y}'' = \transpose{(y_{r+1}, \ldots, y_{n})}$.
Then
\begin{align}
 \mathbf{y}''= A \mathbf{x} + B \mathbf{y}' + \mathbf{b}. 
 \label{eqn:affine:support}
\end{align}

Each entry of $f$ is of the following form:
\begin{align}\label{eqn:f-expression}
  f(\mathbf{x},\mathbf{y})=\lambda \chi(\mathbf{x},\mathbf{y})  i^{Q(\mathbf{x},\mathbf{y})}
\end{align}
where $Q$ is a homogeneous quadratic form in $\mathbf{x}$ and $\mathbf{y}$ 
with all cross terms having even coefficients,
and $\chi(\mathbf{x},\mathbf{y})$ is the 0-1 indicator function of
the support given in (\ref{eqn:affine:support}).
The exponent is evaluated mod $4$.
As discussed in Section~\ref{subsec:affine_def},
we can use equations~\eqref{eqn:affine:support} to substitute $\mathbf{y}''$ by $\mathbf{x}$ and $\mathbf{y}'$.
The substitution in (\ref{eqn:f-expression}) by (\ref{eqn:affine:support})
is valid because every cross term has an even coefficient
and for any integer $x$, we have $x \equiv 0 ~\mbox{or}~1 \pmod 2$
iff $x^2 \equiv 0 ~\mbox{or}~1 \pmod 4$.
Then we have
\begin{align}
  f(\mathbf{x}, \mathbf{y}', \mathbf{y}'')
  = \mu \chi(\mathbf{x},\mathbf{y}',\mathbf{y}'') i^{\transpose{\mathbf{x}} C_1 \mathbf{x} + 
  \transpose{\mathbf{y'}} C_2 \mathbf{y}' 
+ 2 \transpose{\mathbf{y'}} C_3 \mathbf{x}} 
  \label{eqn:affine:expression}
\end{align}
where $C_1$ is an $n$-by-$n$ symmetric matrix, $C_2$ is an $r$-by-$r$ 
symmetric matrix
and $C_3$ is an $r$-by-$n$ matrix, all with integer entries. 
Moreover, $\mu$ equals $\lambda$ multiplied by a power of $i$.
In particular, $|\mu| = |\lambda|$.
Here we use $\chi(\mathbf{x},\mathbf{y}',\mathbf{y}'')$ to mean $\chi(\mathbf{x},\mathbf{y})$ where $\mathbf{y}$ is the concatenation of $\mathbf{y}'$ and $\mathbf{y}''$.
If we attach the matrix $C_3$ with $A$, we can form a square matrix $C_f$:
\begin{align*}
  C_f=
  \begin{pmatrix}
	A\\
	C_3
  \end{pmatrix}.
\end{align*}

\begin{lemma} \label{lem:affine:unitary}
  For a signature $f \in \mathscr{A}$ given in \eqref{eqn:f-expression},
  if $M_f$ is nonsingular, with $|\lambda| = 2^{-r/2}$, 
  where $r$ is the dimension of the affine support of $f$,
  then $M_f$ is unitary and $C_f$ is nonsingular.
\end{lemma}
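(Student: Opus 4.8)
The plan is to compute the product $M_f M_f^*$ directly from the normal form in \eqref{eqn:affine:expression} and to show that, entry by entry, its structure is governed entirely by the $\mathbb{F}_2$-matrix $C_f$; both conclusions then drop out of this single computation. Writing the rows and columns of $M_f$ as $\mathbf{x}, \mathbf{x}', \mathbf{y} \in \{0,1\}^n$, I would fix two row indices $\mathbf{x}$ and $\mathbf{x}'$, set $\mathbf{w} := \mathbf{x}\oplus\mathbf{x}'$, and evaluate $(M_f M_f^*)(\mathbf{x},\mathbf{x}') = \sum_{\mathbf{y}} M_f(\mathbf{x},\mathbf{y}) \overline{M_f(\mathbf{x}',\mathbf{y})}$. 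Both factors are supported only where \eqref{eqn:affine:support} holds, so a common support point exists only if $A\mathbf{x}=A\mathbf{x}'$, i.e. $A\mathbf{w}=0$ over $\mathbb{F}_2$; otherwise the entry vanishes. When $A\mathbf{w}=0$, the two rows share all $2^r$ support points, indexed by the free variables $\mathbf{y}'$, and by \eqref{eqn:affine:expression} the summand collapses to $|\mu|^2\, i^{\,\transpose{\mathbf{x}}C_1\mathbf{x}-\transpose{\mathbf{x}'}C_1\mathbf{x}'}\,(-1)^{\transpose{\mathbf{y}'}C_3\mathbf{w}}$: the $\transpose{\mathbf{y}'}C_2\mathbf{y}'$ contributions cancel, and the cross term $2\transpose{\mathbf{y}'}C_3\mathbf{x}$ yields $i^{2(\cdot)}=(-1)^{(\cdot)}$, reduced mod $2$.

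The key step is the character sum $\sum_{\mathbf{y}'\in\{0,1\}^r}(-1)^{\transpose{\mathbf{y}'}C_3\mathbf{w}}$, which equals $2^r$ if $C_3\mathbf{w}=0$ over $\mathbb{F}_2$ and $0$ otherwise. Combined with the support condition $A\mathbf{w}=0$, this shows that $(M_f M_f^*)(\mathbf{x},\mathbf{x}')$ is nonzero exactly when $A\mathbf{w}=0$ and $C_3\mathbf{w}=0$, that is, when $C_f\mathbf{w}=0$. The normalization $|\lambda|=2^{-r/2}$ (so $|\mu|=2^{-r/2}$, and each row of $M_f$ has $2^r$ nonzero entries of magnitude $2^{-r/2}$) makes the surviving value $2^r|\mu|^2 i^{\delta}=i^{\delta}$ with $\delta=\transpose{\mathbf{x}}C_1\mathbf{x}-\transpose{\mathbf{x}'}C_1\mathbf{x}'$; on the diagonal $\mathbf{w}=0$ forces $\delta=0$, giving value $1$. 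Setting $D=\diag\bigl(i^{\transpose{\mathbf{x}}C_1\mathbf{x}}\bigr)$, I obtain the normal form $M_f M_f^* = D N D^*$, where $N(\mathbf{x},\mathbf{x}') = \chi_{C_f(\mathbf{x}\oplus\mathbf{x}')=0}$ is the indicator that $\mathbf{x}$ and $\mathbf{x}'$ lie in the same coset of $\ker C_f$ in $\mathbb{F}_2^n$.

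Both conclusions follow. Conjugation by the unitary $D$ preserves rank, so $\rank(M_f M_f^*)=\rank(N)$; reordering indices by cosets of $\ker C_f$ makes $N$ block diagonal with each block an all-ones matrix of size $|\ker C_f|$, whence $\rank(N)=2^n/|\ker C_f|$. Since $\rank(M_f M_f^*)=\rank(M_f)$ over $\mathbb{C}$, the hypothesis that $M_f$ is nonsingular forces $|\ker C_f|=1$, i.e. $C_f$ is nonsingular over $\mathbb{F}_2$; that is the first conclusion. But then $\ker C_f=\{0\}$, so $N=I$ and $M_f M_f^* = D D^* = I$, giving that $M_f$ is unitary. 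I expect the main obstacle to be the phase bookkeeping in the first step: one must verify that $M_f(\mathbf{x},\mathbf{y})\overline{M_f(\mathbf{x}',\mathbf{y})}$ factors into a unit phase independent of $\mathbf{y}'$ times a genuine $\mathbb{F}_2$-linear character $(-1)^{\transpose{\mathbf{y}'}C_3\mathbf{w}}$ of $\mathbf{y}'$, so that the inner sum is a clean Gauss-type sum rather than a more general exponential sum. The even-cross-term hypothesis on $Q$ and the factor $2$ on the $C_3$ term in \eqref{eqn:affine:expression} are precisely what guarantee this reduction.
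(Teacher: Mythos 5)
Your proof is correct, and its computational core coincides with the paper's: the entrywise evaluation of $M_f M_f^*$, the support-consistency condition $A\mathbf{w}=0$, the cancellation of the $C_2$ terms, and the character sum over $\mathbf{y}'$ that detects $C_3\mathbf{w}=0$ are exactly the steps in the paper's proof. Where you genuinely depart is in how the two conclusions are extracted. The paper argues by cases: for $r=0$ it shows $C_f=A$ must be nonsingular lest $M_f$ have an all-zero column, and for $r>0$, when $C_f(\mathbf{u}-\mathbf{v})=0$ with $\mathbf{u}\neq\mathbf{v}$ it returns to $M_f$ itself and shows the rows indexed by $\mathbf{u}$ and $\mathbf{v}$ are proportional on their common support, contradicting nonsingularity. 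You instead package the whole computation as the factorization $M_f M_f^* = D N D^*$ with $D$ a diagonal unitary and $N$ the indicator matrix of the cosets of $\ker C_f$ in $\mathbb{F}_2^n$, so that $\rank (M_f M_f^*) = \rank N = 2^n/|\ker C_f|$; nonsingularity of $M_f$ then forces $\ker C_f = \{0\}$, whence $N=I$ and $M_f M_f^* = I$. This buys uniformity (no case split on $r$, no separate row-dependence argument, both conclusions from one rank identity) and in fact proves slightly more: for any affine $f$ with the stated normalization, $\rank M_f = 2^n/|\ker C_f|$, which makes the equivalence of nonsingularity, unitarity, and nonsingularity of $C_f$ (the content of Corollary~\ref{cor:nonsingluar:unitary}) completely transparent. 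The paper's finish is more elementary in that it only uses the fact that a nonsingular matrix has no two proportional rows, whereas you invoke $\rank(M M^*) = \rank(M)$ over $\mathbb{C}$; note, though, that for the lemma itself you only need the trivial direction of that fact (a product of nonsingular matrices is nonsingular), since you can first conclude that $C_f$ is nonsingular and then read off $N=I$.
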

\begin{proof}
  Given a nonsingular $M_f$,
  let $A$, $B$, $\mathbf{b}$, $\mathbf{x}$, $\mathbf{y}'$ and $\mathbf{y}''$ be defined as above.

  Consider the matrix $N=M_f M_f^*$. 
  We will show that $N$ is the identity matrix $I_{2^n}$.
  For the entry $N_{\mathbf{u},\mathbf{v}}$ indexed by row $\mathbf{u}$ and column $\mathbf{v}$, it has the following expression
  \begin{align*}
	N_{\mathbf{u},\mathbf{v}}= 2^{-r}
	\sum_{\substack{\mathbf{y}'\in\{0,1\}^{r}\\\mathbf{y}''\in\{0,1\}^{n-r}}}
	&
	\chi(\mathbf{u},\mathbf{y}',\mathbf{y}'') i^{\transpose{\mathbf{u}} C_1 \mathbf{u} + 
	\transpose{\mathbf{y'}} C_2 \mathbf{y}' + 2 \transpose{\mathbf{y'}} C_3 \mathbf{u}}\\
	&\cdot\chi(\mathbf{v},\mathbf{y}',\mathbf{y}'') i^{-\transpose{\mathbf{v}} C_1 \mathbf{v} - 
	\transpose{\mathbf{y'}} C_2 \mathbf{y}' - 2 \transpose{\mathbf{y'}} C_3 \mathbf{v}}.
  \end{align*}
  Notice that if $\chi(\mathbf{u},\mathbf{y}',\mathbf{y}'')\chi(\mathbf{v},\mathbf{y}',\mathbf{y}'')\neq 0$, then the following equations hold
  \begin{align}	\label{eqn:support1}
	\mathbf{y}'' = A \mathbf{u} + B \mathbf{y}' + \mathbf{b},
  \end{align}
  and
  \begin{align}	\label{eqn:support2}
	\mathbf{y}'' = A \mathbf{v} + B \mathbf{y}' + \mathbf{b}.
  \end{align}
  In this case, these two systems must be consistent and we have
  \begin{align}	\label{eqn:support:consistent}
	A(\mathbf{u}-\mathbf{v})=0.
  \end{align}
  When this is indeed the case, any solution ($\mathbf{y}',\mathbf{y}''$) to \eqref{eqn:support1} is also a solution to \eqref{eqn:support2}, and vice versa.
  Also notice that for $\mathbf{u}=\mathbf{v}$, condition \eqref{eqn:support:consistent} always holds.
  On the other hand, if \eqref{eqn:support:consistent} does not hold, then $N_{\mathbf{u},\mathbf{v}}=0$.
  We may assume that \eqref{eqn:support:consistent} holds.
  Furthermore, we may assume that $\mathbf{y}''$ is substituted by \eqref{eqn:support1}, 
  which is the same as \eqref{eqn:support2} under \eqref{eqn:support:consistent}, and drop the indicator function $\chi$.
  Since $\lambda$ is chosen so that $\mu \overline{\mu} = \lambda\overline{\lambda} = 1/2^r$ (recall \eqref{eqn:affine:expression}), we have that
  \begin{align}
	2^r N_{\mathbf{u},\mathbf{v}}
	&= \sum_{\mathbf{y}'\in\{0,1\}^{r}}
	i^{\transpose{\mathbf{u}} C_1 \mathbf{u} + \transpose{\mathbf{y'}} C_2 \mathbf{y}' + 2 \transpose{\mathbf{y'}} C_3 \mathbf{u}
	-\transpose{\mathbf{v}} C_1 \mathbf{v} - \transpose{\mathbf{y'}} C_2 \mathbf{y}' - 2 \transpose{\mathbf{y'}} C_3 \mathbf{v}} \notag\\
	&= \sum_{\mathbf{y}'\in\{0,1\}^{r}}
	i^{(\transpose{\mathbf{u}} C_1 \mathbf{u}-\transpose{\mathbf{v}} C_1 \mathbf{v}) + 2 \transpose{\mathbf{y'}} C_3 (\mathbf{u} - \mathbf{v}) } \notag\\
	&= i^{(\transpose{\mathbf{u}} C_1 \mathbf{u}-\transpose{\mathbf{v}} C_1 \mathbf{v})}
	\sum_{\mathbf{y}'\in\{0,1\}^{r}} (-1)^{\transpose{\mathbf{y'}} C_3 (\mathbf{u} - \mathbf{v})} \label{eqn:entry:exp}
  \end{align}
  If $\mathbf{u}=\mathbf{v}$, then \eqref{eqn:entry:exp} simplifies into 
  \begin{align*}
	2^r N_{\mathbf{u},\mathbf{u}}
 	&=\sum_{\mathbf{y'}\in\{0,1\}^{r}} 1 =2^r.
  \end{align*}
  Therefore the diagonal entries in $N$ are all $1$.

  If $r=0$, then the matrix $C_f=A$ is a square matrix. 
  It must have full rank over $\mathbb{Z}_2$.
  Otherwise, there is a non-zero vector $\mathbf{w} \in \mathbb{Z}_2^n$ such that $\transpose{\mathbf{w}} A =0$ over $\mathbb{Z}_2^n$.
  This would give a non-trivial affine linear equation on $\mathbf{y}$ by \eqref{eqn:affine:support}. 
  Then $M_f$ would have an all 0 column, a contradiction.
  Therefore the condition \eqref{eqn:support:consistent} cannot hold for $\mathbf{u}\neq\mathbf{v}$.
  Hence all off-diagonal entries in $N$ are $0$.
  This implies that $N$ is the identity matrix, and $M_f$ is unitary.

  Otherwise $r>0$. 
  For $\mathbf{u}\neq\mathbf{v}$, by \eqref{eqn:entry:exp}, we have that
  \begin{align*}
	2^r N_{\mathbf{u},\mathbf{v}}
	&=i^{(\transpose{\mathbf{u}} C_1 \mathbf{u}-\transpose{\mathbf{v}} C_1 \mathbf{v})}
	\sum_{\mathbf{y}'\in\{0,1\}^{r}} (-1)^{\transpose{\mathbf{y'}} C_3 (\mathbf{u} - \mathbf{v})}\\
	&=i^{(\transpose{\mathbf{u}} C_1 \mathbf{u}-\transpose{\mathbf{v}} C_1 \mathbf{v})}
	\sum_{\mathbf{y}'\in\{0,1\}^{r}} \prod_{k=1}^r (-1)^{y_k \mathbf{c}_k(\mathbf{u} - \mathbf{v})}\\
	&=i^{(\transpose{\mathbf{u}} C_1 \mathbf{u}-\transpose{\mathbf{v}} C_1 \mathbf{v})}
	\prod_{k=1}^r \sum_{y_k\in\{0,1\}}  ((-1)^{\mathbf{c}_k{(\mathbf{u} - \mathbf{v})}})^{y_k},
  \end{align*}
  where $\mathbf{c}_k$ is the $k$th row of $C_3$.

  If $C_3 (\mathbf{u} - \mathbf{v})\neq\mathbf{0}$, it is easy to see that $N_{\mathbf{u},\mathbf{v}}=0$.
  Otherwise, $C_3 (\mathbf{u} - \mathbf{v})=\mathbf{0}$ and we argue that $M_f$ is singular.
  In fact, the rows $\mathbf{u}$ and $\mathbf{v}$ are linearly dependent.
  First, since condition \eqref{eqn:support:consistent} holds, 
  an entry indexed by $(\mathbf{y}',\mathbf{y}'')$ in row $\mathbf{u}$ is nonzero if and only if 
  the corresponding entry in  row $\mathbf{v}$ is.
  For a solution $(\mathbf{y}',\mathbf{y}'')$ to the system \eqref{eqn:support1} and equivalently \eqref{eqn:support2}, 
  we have that
  \begin{align*}
	\frac{f(\mathbf{u},\mathbf{y}',\mathbf{y}'')}{f(\mathbf{v},\mathbf{y}',\mathbf{y}'')}
	&=i^{\transpose{\mathbf{u}} C_1 \mathbf{u} + \transpose{\mathbf{y'}} C_2 \mathbf{y}' + 2 \transpose{\mathbf{y'}} C_3 \mathbf{u}
	-\transpose{\mathbf{v}} C_1 \mathbf{v} - \transpose{\mathbf{y'}} C_2 \mathbf{y}' - 2 \transpose{\mathbf{y'}} C_3 \mathbf{v}}\\
	&=i^{(\transpose{\mathbf{u}} C_1 \mathbf{u}-\transpose{\mathbf{v}} C_1 \mathbf{v})}
	  (-1)^{\mathbf{y'}C_3{(\mathbf{u} - \mathbf{v})}}\\
	&=i^{(\transpose{\mathbf{u}} C_1 \mathbf{u}-\transpose{\mathbf{v}} C_1 \mathbf{v})},
  \end{align*}
  where in the last equality we used $C_3 (\mathbf{u} - \mathbf{v})=\mathbf{0}$.
  This ratio is independent of $\mathbf{y}$.
  Therefore the two rows indexed by $\mathbf{u}$ and $\mathbf{v}$ are linearly dependent and the matrix $M_f$ is singular. Contradiction.

  To conclude, $N_{\mathbf{u},\mathbf{v}}=1$ if $\mathbf{u} = \mathbf{v}$,
  and $N_{\mathbf{u},\mathbf{v}}=0$ otherwise.
  This is to say that $N$ is the identity matrix and $M_f$ is unitary.
  Moreover, $\mathbf{u}=\mathbf{v}$ is the only solution to the system
  \begin{align*}
  \begin{pmatrix}
	A\\
	C_3
  \end{pmatrix}(\mathbf{u}-\mathbf{v})
  =C_f(\mathbf{u}-\mathbf{v})=\mathbf{0}.
  \end{align*}
  So $C_f$ is nonsingular.
\end{proof}

Since unitary matrices are non-singular, it follows that $\mathcal{UA}_n \subseteq \mathcal{GA}_n$.
Lemma \ref{lem:affine:unitary} leads to the following corollary.
\begin{corollary}  \label{cor:nonsingluar:unitary}
  $\mathcal{GA}_n/U(1)=\mathcal{UA}_n /U(1)$.
\end{corollary}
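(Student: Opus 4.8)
The plan is to prove the two inclusions $\mathcal{UA}_n/U(1) \subseteq \mathcal{GA}_n/U(1)$ and $\mathcal{GA}_n/U(1) \subseteq \mathcal{UA}_n/U(1)$ separately. The first is immediate: since every unitary matrix is nonsingular we have $\mathcal{UA}_n \subseteq \mathcal{GA}_n$, and applying the quotient map by $U(1)$ preserves this inclusion. So the real work is the reverse direction, where Lemma~\ref{lem:affine:unitary} does the heavy lifting.

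For the reverse inclusion, I would start from an arbitrary $M_f \in \mathcal{GA}_n$, written as in \eqref{eqn:f-expression} with leading scalar $\lambda$ and affine support of dimension $r$. The idea is to rescale $f$ so that it meets the magnitude hypothesis of Lemma~\ref{lem:affine:unitary}. Concretely, set $t = 2^{-r/2}/|\lambda|$, a positive real, and let $f' = t f$. Then $f'$ is again affine (only the leading scalar changes, to some $\lambda'$ with $|\lambda'| = 2^{-r/2}$), its support and hence $r$ are unchanged, and $M_{f'} = t\,M_f$ is still nonsingular. Now Lemma~\ref{lem:affine:unitary} applies verbatim and yields that $M_{f'}$ is unitary, i.e.\ $M_{f'} \in \mathcal{UA}_n$. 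Since $M_f$ and $M_{f'}$ differ only by the scalar $t$, they determine the same class once scalars are quotiented out, so the class of $M_f$ lies in $\mathcal{UA}_n/U(1)$.

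The step I expect to require the most care is reconciling this rescaling with the quotient being taken by $U(1)$ rather than by all of $\mathbb{C}^{*}$, since the normalizing factor $t$ is a positive real and is generally not of norm one. The clean way to handle this is to observe that within each scalar-ray $\{c\,M_f : c \in \mathbb{C}^{*}\}$ of $\mathcal{GA}_n$ there is a unitary representative, and it is unique up to $U(1)$. Indeed, if $c\,M_f$ and $c'\,M_f$ are both unitary then $|c|^2 M_f M_f^{*} = I = |c'|^2 M_f M_f^{*}$, forcing $|c| = |c'|$ and hence $c/c' \in U(1)$. Thus each ray of $\mathcal{GA}_n$ meets $\mathcal{UA}_n$ in exactly one $U(1)$-orbit, which is precisely the identification needed to conclude $\mathcal{GA}_n/U(1) = \mathcal{UA}_n/U(1)$.
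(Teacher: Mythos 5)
Your proof is correct and follows essentially the same route as the paper: the inclusion $\mathcal{UA}_n \subseteq \mathcal{GA}_n$ is immediate because unitary matrices are nonsingular, and the reverse direction is exactly an application of Lemma~\ref{lem:affine:unitary} after rescaling the signature so that $|\lambda| = 2^{-r/2}$. Your care about the normalizing scalar being a positive real rather than an element of $U(1)$ — resolved by showing each scalar ray of $\mathcal{GA}_n$ meets $\mathcal{UA}_n$ in exactly one $U(1)$-orbit — addresses a subtlety the paper silently glosses over (it gives no written proof beyond citing the lemma), and your resolution is the right reading of the statement.
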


Now we prove that nonsingular signature matrices of affine signatures are gates in the Clifford group.
\begin{lemma} \label{lem:affine:inclusion}
 $\mathcal{GA}_n/U(1) \subseteq\mathcal{C}_n$.
\end{lemma}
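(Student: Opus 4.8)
The plan is to show directly that every nonsingular affine signature matrix stabilizes the Pauli group. Let $M_f \in \mathcal{GA}_n$. By Corollary~\ref{cor:nonsingluar:unitary} we may rescale $f$ by a norm-one scalar and assume $M_f \in \mathcal{UA}_n$; then Lemma~\ref{lem:affine:unitary} gives that $M_f$ is unitary, that $M_f^{-1} = M_f^*$, and that the associated matrix $C_f$ is nonsingular. By the definition of $\mathcal{C}_n$ it therefore suffices to verify that $M_f \sigma M_f^* \in \mathcal{P}_n$ for every $\sigma \in \mathcal{P}_n$. Since conjugation by $M_f$ is an automorphism of the matrix algebra that fixes scalars, and the Pauli group $\mathcal{P}_n$ is generated (up to the phases $\pm 1, \pm i$) by the single-qubit operators $X_j$ and $Z_j$ for $j \in [n]$, it is enough to check that $M_f X_j M_f^*$ and $M_f Z_j M_f^*$ lie in $\mathcal{P}_n$ for each $j$.

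The core is the computation of these conjugates using the explicit entry form in~\eqref{eqn:affine:expression}. For $Z_j$, the entry $(M_f Z_j M_f^*)_{\mathbf{u},\mathbf{v}} = \sum_{\mathbf{a}} f(\mathbf{u},\mathbf{a})(-1)^{a_j}\overline{f(\mathbf{v},\mathbf{a})}$ is exactly the sum analyzed in the proof of Lemma~\ref{lem:affine:unitary} with an extra factor $(-1)^{a_j}$ inserted. Parametrizing the column index $\mathbf{a}$ by the free variables $\mathbf{y}'$ with the dependent block determined by~\eqref{eqn:affine:support}, the coordinate $a_j$ becomes an affine-linear function of $\mathbf{y}'$ and $\mathbf{u}$, so the inner sum is again a character sum $\sum_{\mathbf{y}'} (-1)^{\transpose{\mathbf{y}'} C_3(\mathbf{u}-\mathbf{v}) + \transpose{\boldsymbol{\ell}}\mathbf{y}'}$ for a fixed $\boldsymbol{\ell}$. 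This sum vanishes unless $C_3(\mathbf{u}-\mathbf{v})$ equals a fixed vector and, together with the consistency condition $A(\mathbf{u}-\mathbf{v})=0$ from~\eqref{eqn:support:consistent}, forces $C_f(\mathbf{u}-\mathbf{v})$ to equal a fixed vector. Because $C_f$ is nonsingular, this pins down $\mathbf{u}-\mathbf{v} = \mathbf{e}$ for a single constant vector $\mathbf{e}$, which is precisely the Pauli support condition~\eqref{eqn:pauli:support}. Substituting $\mathbf{v} = \mathbf{u} \oplus \mathbf{e}$ into the surviving phase $i^{\transpose{\mathbf{u}} C_1 \mathbf{u} - \transpose{\mathbf{v}} C_1 \mathbf{v}}$ cancels the quadratic-in-$\mathbf{u}$ part modulo $4$, leaving an expression of the form $i^c (-1)^{\transpose{\mathbf{r}}\mathbf{u}}$, which matches the Pauli entry form~\eqref{Pauli-entry-expression}. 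Hence $M_f Z_j M_f^* \in \mathcal{P}_n$.

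The conjugate $M_f X_j M_f^*$ is handled analogously: here the column index of the left factor is shifted by $\mathbf{e}_j$, which changes the affine support relation and the linear term in the character sum but leaves the structure intact; the same nonsingularity of $C_f$ localizes the support to a coset $\mathbf{u}-\mathbf{v} = \mathbf{e}'$ and the phase again reduces to Pauli form. One may alternatively observe at the outset that $M_f^*$ is itself an affine signature matrix (complex conjugation negates the quadratic form and the transpose permutes variables, both preserving $\mathscr{A}$) and that products of affine signature matrices are affine (matrix multiplication is a composition of the operations in cases~\ref{case:identify} and~\ref{case:marginal} of Lemma~\ref{lem:affine:simple-operations}); since $\sigma$ is manifestly affine by~\eqref{Pauli-entry-expression}, this shows $M_f \sigma M_f^* \in \mathcal{UA}_n$ for free, so only the support and phase structure remain to be identified as Pauli-type.

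I expect the main obstacle to be the last step of each computation: the careful modular bookkeeping needed to verify that, on the coset $\mathbf{u}-\mathbf{v}=\mathbf{e}$, the exponent $\transpose{\mathbf{u}} C_1 \mathbf{u} - \transpose{\mathbf{v}} C_1 \mathbf{v}$ together with the character-sum sign collapses to exactly $i^c (-1)^{\transpose{\mathbf{r}}\mathbf{u}}$ with $c$ independent of $\mathbf{u}$. This uses that $Q$ has even cross-term coefficients and the mod-$2$/mod-$4$ compatibility of quadratic forms noted in Section~\ref{subsec:affine_def}, and it must be carried out consistently for both $X_j$ and $Z_j$; everything else (reduction to generators, and localization of the support via the nonsingularity of $C_f$) is structural.
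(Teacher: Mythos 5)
Your proposal is correct, and its engine is the same as the paper's: an explicit entry-wise computation of the conjugated matrix, a character sum over the free column variables that vanishes unless $C_f(\mathbf{u}-\mathbf{v})$ equals a fixed vector, localization of the support to a single coset $\mathbf{u}-\mathbf{v}=\mathbf{t}$ via the nonsingularity of $C_f$ from Lemma~\ref{lem:affine:unitary}, and the mod-4 bookkeeping (even cross terms, $\mathbf{x}\equiv\mathbf{y} \pmod 2 \Rightarrow \transpose{\mathbf{x}}S\mathbf{x}\equiv\transpose{\mathbf{y}}S\mathbf{y} \pmod 4$) that collapses the remaining phase to $i^{c'}(-1)^{\transpose{\mathbf{r}'}\mathbf{u}}$. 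The one genuine organizational difference is at the top: you reduce to the Pauli generators $X_j, Z_j$ using the fact that conjugation by a unitary is multiplicative and fixes the phases $\pm 1, \pm i$, whereas the paper conjugates an arbitrary Pauli matrix $P$ in a single computation, keeping both the shift $\mathbf{e}$ and the sign vector $\mathbf{r}$ of expression \eqref{Pauli-entry-expression} in play at once. Your reduction buys two leaner special cases ($\mathbf{e}=0, \mathbf{r}=\mathbf{e}_j$ for $Z_j$, and $\mathbf{e}=\mathbf{e}_j, \mathbf{r}=0$ for $X_j$) at the cost of (i) needing unitarity up front to justify multiplicativity of conjugation---which you correctly secure via Corollary~\ref{cor:nonsingluar:unitary} and Lemma~\ref{lem:affine:unitary}, the same normalization the paper uses---and (ii) running the support/phase analysis twice; the paper's single general computation is heavier notationally but needs no generator theorem and no homomorphism argument. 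Your closing alternative---that $M_f \sigma M_f^*$ lies in $\mathcal{UA}_n$ automatically, because $M_f^*$ and $\sigma$ are affine and matrix products of affine signature matrices are affine by cases~\ref{case:identify} and~\ref{case:marginal} of Lemma~\ref{lem:affine:simple-operations}---is a nice observation, but as you yourself note it cannot finish the proof on its own: unitary affine matrices need not be Pauli (the Hadamard matrix $H$ is a counterexample), so the coset-localization and phase identification remain the crux either way.
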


\begin{proof}
  For a given matrix $M_f\in\mathcal{GA}_n$ and $P\in\mathcal{P}_n$, we want to show that $N=M_f P M_f^*\in\mathcal{P}_n$.
Clearly we may assume $c=0$ in the expression \eqref{Pauli-entry-expression}
for entries of $P$.
  The entry $N_{\mathbf{u},\mathbf{v}}$ at row $\mathbf{u}$ and column $\mathbf{v}$ can be expressed as
  \begin{align*}
     N_{\mathbf{u},\mathbf{v}}  
  = 
	\sum_{\mathclap{\substack{\mathbf{x}', \mathbf{y}' \in \{0,1\}^{r}\\ \mathbf{x}'', \mathbf{y}'' \in \{0,1\}^{n-r}}}}\;
	 & 2^{-r}
	\xi(\mathbf{x},\mathbf{y})
	\chi(\mathbf{u},\mathbf{x'},\mathbf{x}'') 
	(-1)^{\transpose{\mathbf{r}} \mathbf{x}}
    \cdot i^{\transpose{\mathbf{u}} C_1 \mathbf{u} + 
	\transpose{\mathbf{x'}} C_2 \mathbf{x}' + 2 \transpose{\mathbf{x'}} C_3 \mathbf{u}}\\
       &\cdot\chi(\mathbf{v},\mathbf{y}',\mathbf{y}'')(-1)^{\transpose{\mathbf{r}} \mathbf{y}}
	\cdot i^{-\transpose{\mathbf{v}} C_1 \mathbf{v} - 
	\transpose{\mathbf{y'}} C_2 \mathbf{y'} - 2 \transpose{\mathbf{y'}} C_3 \mathbf{v}}, 
  \end{align*}
  where $C_1$, $C_2$ and $C_3$
  come from the general expression \eqref{eqn:affine:expression} of the entry in $M_f$.

  If $\xi(\mathbf{x},\mathbf{y})\chi(\mathbf{u},\mathbf{x'},\mathbf{x}'')\chi(\mathbf{v},\mathbf{y}',\mathbf{y}'')\neq 0$,
  then the following equations must hold
  \begin{align*}
	\mathbf{x}'' &= A \mathbf{u} + B \mathbf{x}' + \mathbf{b},\\
	\mathbf{y}'' &= A \mathbf{v} + B \mathbf{y}' + \mathbf{b},\\
	\mathbf{x}' - \mathbf{y}' &= \mathbf{e}',\\
	\mathbf{x}'' - \mathbf{y}'' &= \mathbf{e}'',
  \end{align*}
  where we write $\mathbf{e} = \begin{pmatrix} \mathbf{e}' \\ \mathbf{e}''\end{pmatrix}$ (recall \eqref{eqn:pauli:support}).
  The first two constraints come from the affine support constraint \eqref{eqn:affine:support}
  whereas the last two come from the Pauli matrix constraint \eqref{eqn:pauli:support}.
  Therefore we have 
  \begin{align}	\label{eqn:pauli:affine}
	A(\mathbf{u}-\mathbf{v}) = \mathbf{e}'' - B \mathbf{e}'.
  \end{align}
  When this is the case and $\lambda$ is normalized so that $\lambda\overline{\lambda}=1/2^r$, we have that
  \begin{align*}
	2^r N_{\mathbf{u},\mathbf{v}} & =
	\sum_{\substack{\mathbf{x}'\in\{0,1\}^{r}}}
	(-1)^{\transpose{\mathbf{r}} \mathbf{x}+\transpose{\mathbf{r}} (\mathbf{x+e})}\cdot
    i^{\transpose{\mathbf{u}} C_1 \mathbf{u} + 
	\transpose{\mathbf{x'}} C_2 \mathbf{x}' + 2 \transpose{\mathbf{x'}} C_3 \mathbf{u}
	-\transpose{\mathbf{v}} C_1 \mathbf{v} - 
	\transpose{(\mathbf{x'+e'})} C_2 (\mathbf{x'+e'}) - 2 \transpose{(\mathbf{x'+e'})} C_3 \mathbf{v}}\\
	&=(-1)^{\transpose{\mathbf{r}} \mathbf{e}}\cdot
	i^{\transpose{\mathbf{u}} C_1 \mathbf{u}-\transpose{\mathbf{v}} C_1 \mathbf{v}}
	\sum_{\substack{\mathbf{x}'\in\{0,1\}^{r}}}
    i^{\transpose{\mathbf{x'}} C_2 \mathbf{x}' + 2 \transpose{\mathbf{x'}} C_3 \mathbf{u}
	-\transpose{(\mathbf{x'+e'})} C_2 (\mathbf{x'+e'}) - 2 \transpose{(\mathbf{x'+e'})} C_3 \mathbf{v}}\\
	&=(-1)^{\transpose{\mathbf{r}} \mathbf{e}}\cdot
	i^{\transpose{\mathbf{u}} C_1 \mathbf{u}-\transpose{\mathbf{v}} C_1 \mathbf{v}}
	\sum_{\substack{\mathbf{x}'\in\{0,1\}^{r}}}
    i^{2 \transpose{\mathbf{x'}} C_3 \mathbf{u} - 2 \transpose{\mathbf{x'}} C_3 \mathbf{v}-2 \transpose{\mathbf{e'}} C_3 \mathbf{v}
	-\transpose{\mathbf{e'}} C_2 \mathbf{e'} - 2\transpose{\mathbf{x'}} C_2 \mathbf{e'}}\\
	&=(-1)^{\transpose{\mathbf{r}} \mathbf{e}-\transpose{\mathbf{e'}} C_3 \mathbf{v}}\cdot
	i^{\transpose{\mathbf{u}} C_1 \mathbf{u}-\transpose{\mathbf{v}} C_1 \mathbf{v}-\transpose{\mathbf{e'}} C_2 \mathbf{e'}}
	\sum_{\substack{\mathbf{x}'\in\{0,1\}^{r}}}
    (-1)^{ \transpose{\mathbf{x'}} C_3 \mathbf{u} -  \transpose{\mathbf{x'}} C_3 \mathbf{v}
	- \transpose{\mathbf{x'}} C_2 \mathbf{e'}}\\
	&=(-1)^{\transpose{\mathbf{r}} \mathbf{e}-\transpose{\mathbf{e'}} C_3 \mathbf{v}}\cdot
	i^{\transpose{\mathbf{u}} C_1 \mathbf{u}-\transpose{\mathbf{v}} C_1 \mathbf{v}-\transpose{\mathbf{e'}} C_2 \mathbf{e'}}
	\sum_{\substack{\mathbf{x}'\in\{0,1\}^{r}}}
    (-1)^{\transpose{\mathbf{x'}} (C_3 (\mathbf{u} - \mathbf{v}) - C_2 \mathbf{e'})}.
  \end{align*}
  Similar to the proof of Lemma~\ref{lem:affine:unitary}, the entry $N_{\mathbf{u},\mathbf{v}}\neq 0$ if and only if 
  the following equation holds:
  \begin{align*}
  \begin{pmatrix}
	A\\
	C_3
  \end{pmatrix}(\mathbf{u}-\mathbf{v})
  =C_f(\mathbf{u}-\mathbf{v})=
  \begin{pmatrix}
	\mathbf{e}'' - B \mathbf{e}'\\
	C_2 \mathbf{e'}.
  \end{pmatrix}
  \end{align*}
Suppose this equation holds.
  Since the matrix $C_f$ is nonsingular by Lemma~\ref{lem:affine:unitary},
 there exists a unique solution $\mathbf{t}$ that $\mathbf{u}-\mathbf{v}=\mathbf{t}$.
  It implies that
  \begin{align*}
	2^r N_{\mathbf{u},\mathbf{v}} & = (-1)^{\transpose{\mathbf{r}} \mathbf{e}-\transpose{\mathbf{e'}} C_3 (\mathbf{u-t})}\cdot
	i^{\transpose{\mathbf{u}} C_1 \mathbf{u}-\transpose{(\mathbf{u-t})} C_1 \mathbf{(u-t)}-\transpose{\mathbf{e'}} C_2 \mathbf{e'}}
	\cdot 2^r\\
	&=2^r(-1)^{\transpose{\mathbf{r}} \mathbf{e}-\transpose{\mathbf{e'}} C_3 (\mathbf{u-t})}\cdot
	i^{-\transpose{\mathbf{t}} C_1 \mathbf{t}+2\transpose{\mathbf{t}} C_1 \mathbf{u}-\transpose{\mathbf{e'}} C_2 \mathbf{e'}}\\
	&=2^r
	i^{-\transpose{\mathbf{t}} C_1 \mathbf{t}-\transpose{\mathbf{e'}} C_2 \mathbf{e'}}
	(-1)^{\transpose{\mathbf{r}} \mathbf{e}-\transpose{\mathbf{e'}} C_3 \mathbf{u}+\transpose{\mathbf{e'}} C_3 \mathbf{t}}\cdot
	(-1)^{\transpose{\mathbf{t}} C_1 \mathbf{u}}\\
	&=2^r
	i^{-\transpose{\mathbf{t}} C_1 \mathbf{t}-\transpose{\mathbf{e'}} C_2 \mathbf{e'}+2\transpose{\mathbf{e'}} C_3 \mathbf{t}+2\transpose{\mathbf{r}} \mathbf{e}}
	(-1)^{(\transpose{\mathbf{t}} C_1-\transpose{\mathbf{e'}} C_3)\mathbf{u}}.
  \end{align*}
  Therefore, we can express the entries of $N$ as
  \begin{align*}
	N_{\mathbf{u},\mathbf{v}}=i^{c'}\cdot\xi'(\mathbf{u},\mathbf{v})\cdot(-1)^{\transpose{\mathbf{r'}}\mathbf{u}},
  \end{align*}
  where 
  \begin{align*}
    c'=-\transpose{\mathbf{t}} C_1 \mathbf{t}-\transpose{\mathbf{e'}} C_2 \mathbf{e'}+2\transpose{\mathbf{e'}} C_3 \mathbf{t} + 2\transpose{\mathbf{r}} \mathbf{e},
  \end{align*}
  $\xi'(\mathbf{u},\mathbf{v})$ is the 0-1 indicator function of
  \begin{align*}
    \mathbf{u} - \mathbf{v} = \mathbf{t}, 
  \end{align*}
  and
  \begin{align*}
	\mathbf{r'}=\transpose{\mathbf{t}} C_1-\transpose{\mathbf{e'}} C_3.
  \end{align*}
  This matches the general expression of a Pauli matrix, so $N \in \mathcal{P}_n$.
\end{proof}

\begin{proof}[Proof of Theorem \ref{thm:equiv}]
  The theorem follows by combining Lemma~\ref{lem:clifford:inclusion} and Lemma~\ref{lem:affine:inclusion}.
\end{proof}

\bibliographystyle{plain}
\bibliography{bib}

\end{document}